\newcommand{\eps}{\varepsilon}
\newcommand{\R}{\mathbb{R}}
\newcommand{\EquationName}[1]{\label{eq:#1}}
\newcommand{\CorollaryName}[1]{\label{cor:#1}}
\newcommand{\SectionName}[1]{\label{sec:#1}}
\newcommand{\TheoremName}[1]{\label{thm:#1}}
\newcommand{\Equation}[1]{Eq.\:\eqref{eq:#1}}
\newcommand{\Corollary}[1]{Corollary~\ref{cor:#1}}
\newcommand{\Section}[1]{Section~\ref{sec:#1}}
\newcommand{\Theorem}[1]{Theorem~\ref{thm:#1}}
\newcommand{\Eqsub}[1]{\eqref{eq:#1}}
\newtheorem{theorem}{Theorem}
\newtheorem{lemma}[theorem]{Lemma}
\newtheorem{fact}[theorem]{Fact}
\newtheorem{corollary}[theorem]{Corollary}
\def\reals{\mathbb{R}}
\def\tail{\mathrm{tail}}
\newcommand{\eqdef}{\mathbin{\stackrel{\rm def}{=}}}
 \author{Tom Morgan\thanks{Harvard University. \texttt{tdmorgan@seas.harvard.edu}. Supported by NSF grants CCF-1320231 and CNS-1228598.}
   \and Jelani Nelson\thanks{Harvard University. \texttt{minilek@seas.harvard.edu}. Supported by NSF grant IIS-1447471 and
   CAREER award CCF-1350670, ONR Young Investigator award N00014-15-1-2388, and a Google Faculty Research Award.}
}
\title{A note on reductions between compressed sensing guarantees}
\begin{document}

\maketitle

\begin{abstract}
In {\em compressed sensing}, one wishes to acquire an approximately sparse high-dimensional signal $x\in\R^n$ via $m\ll n$ noisy linear measurements, then later approximately recover $x$ given only those measurement outcomes. Various guarantees have been studied in terms of the notion of approximation in recovery, and some isolated folklore results are known stating that some forms of recovery are stronger than others, via black-box reductions. In this note we provide a general theorem concerning the hierarchy of strengths of various recovery guarantees. As a corollary of this theorem, by reducing from well-known results in the compressed sensing literature, we obtain an efficient $\ell_p/\ell_p$ scheme for any $0<p<1$ with the fewest number of measurements currently known amongst efficient schemes, improving recent bounds of \cite{SomaY16}.
\end{abstract}

\section{Introduction}\SectionName{intro}

The field of {\em compressed sensing} \cite{CandesT05, Donoho06} is concerned with recovering approximately sparse signals from few (possibly noisy) linear measurements. That is, given access to $y = \Phi x + e$, where $\Phi\in\R^{m\times n}$ is some matrix and $x, e\in\R^n$ has $e$ of small norm and $x$ being approximately sparse. That is, $x$ can be decomposed as $x = f + e'$ where $f$ is sparse, i.e.\ $k \eqdef \|f\|_0$ is small, and $e'$ has small norm. Here $\|\cdot\|_0$ denotes support size. Ideally we would like $m\ll n$ (i.e.\ few measurements), and that there is an efficient algorithm $\mathcal{R}$ which, knowing $\Phi$ and given only access to $y$ but not $x$, recovers some $\hat{x} = \mathcal{R}(y)$ such that $x-\hat{x}$ has small norm (in terms of the norms of $e, e'$). That is, for some norms $\|\cdot\|_X, \|\cdot\|_Y$ and some constants $C>0$ and $c\ge 0$ depending only $X, Y$, we would like
\begin{equation}
\|x - \mathcal{R}(y)\|_X \le C k^{-c} \cdot \sigma_k(x)_Y , \EquationName{uniform}
\end{equation}
where $\sigma_k(x)_Y = \inf_{\|z\|_0 \le k} \|x - z\|_Y$ is the $\|\cdot\|_Y$-norm error of the best $k$-sparse approximation to $x$. Popular guarantees investigated previously include so-called $\ell_p/\ell_q$ guarantees, of the form
\begin{equation}
\|x - \mathcal{R}(y)\|_{\ell_p^n} \le C k^{-c} \cdot \sigma_k(x)_{\ell_q^n}  \EquationName{guarantee}
\end{equation}
where the constant $c$ equals $1/q - 1/p$ \cite{CohenDD08}. For the case $p = q$, one can often even take $C = 1+\eps$ and develop schemes with small $m$ and efficient $\mathcal{R}$, in which a dependence on $\eps$ enters $m$ (see e.g.\ \cite{IndykR08} for $p=q=1$). In all the cases of $1\le q\le p<\infty$, it is known that any scheme achieving \Eqsub{guarantee} must have $m \gtrsim k\log(n/k)$ (and sometimes much larger, depending on $p, q$), regardless of how inefficient $\mathcal{R}$ is allowed to be. This is due to a connection with known bounds on Gelfand widths \cite{CohenDD08}. 

The case $p = q < 1$ was first investigated in \cite{ChartrandS08,ShenL10} (although $\ell_p$ is not a norm in this case), where it was shown that there exists {\em some} recovery scheme $\mathcal{R}$ (though not efficient) that allows for $m\simeq C_1(p)k + C_2(p)k\log(n/k)$ measurements for some $C_2(p)\rightarrow 0$ as $p\rightarrow 0$. In other words, for sufficiently small $p \ll 1$, there exists an upper bound violating the lower bound for $\ell_q/\ell_q$ norm guarantees for $q\ge 1$. For the case $p = 1/2^j$ where $j$ is any positive integer, and for any fixed constant $\epsilon>0$, recent work of \cite{SomaY16} provides a scheme with an {\em efficient} (polynomial-time in $n$) $\mathcal{R}$ for achieving a slight weakening of the $\ell_p/\ell_p$ recovery guarantee, but with a larger number of measurements $m \simeq k^{2/p}\log n$. Specifically, their error guarantee is 
$$
\|x - \mathcal{R}(y)\|_{\ell_p^n} \le C \sigma_k(x)_{\ell_q^n} + \epsilon\cdot\|x\|_\infty
$$

It is folklore that some norm guarantees are stronger than others. By stronger, we mean a reduction in the sense that if the scheme $(\Phi, \mathcal{R}_1)$ achieves the $\|\cdot\|_{X_1} / \|\cdot\|_{Y_1}$ recovery guarantee, then there is an efficient algorithm $\mathcal{A}$ such that if $\mathcal{R}_2(y)$ is simply set to be $\mathcal{A}(\mathcal{R}_1(y), y)$, then $(\Phi, \mathcal{R}_2)$ achieves the $\|\cdot\|_{X_2} / \|\cdot\|_{Y_2}$ guarantee (sometimes in these reductions, the values of $k$ and $C$ may change by constant factors). In this sense, achieving the $\|\cdot\|_{X_1} / \|\cdot\|_{Y_1}$ guarantee is {\em stronger} than achieving the $\|\cdot\|_{X_2} / \|\cdot\|_{Y_2}$ guarantee (since devising an efficient scheme for the former implies an efficient scheme for the latter). It is folklore that, for example, an $\ell_2/\ell_2$ scheme is stronger than $\ell_2/\ell_1$, which in turn is stronger than $\ell_1/\ell_1$ (note though it is impossible to achieve the $\ell_2/\ell_2$ guarantee without weakening to a certain probabilistic guarantee \cite{CohenDD08}, i.e.\ {\em nonuniformity} --- we discuss nonuniformity in \Section{nonuniform}). The main observation of this note is a common generalization of both of these folklore reductions. In particular, we show the following main theorem.

\begin{theorem}\TheoremName{main}
For $p \geq r \geq s > 0$ and $q \geq s$, if there exists a recovery scheme $(\Phi, \mathcal{R}')$ s.t.\ for some constant $C'$ independent of $k, n$
$$\|x-\mathcal{R}'(y)\|_p \leq C'\cdot k^{\frac{1}{p}-\frac{1}{q}} \sigma_{2k}(x)_q ,$$
then there exists a recovery scheme $(\Phi, \mathcal{R})$ such that for some constant $C$
$$\|x-\mathcal{R}(y)\|_r \leq C \cdot k^{\frac{1}{r}-\frac{1}{s}} \sigma_k(x)_s .$$
Furthermore, if $\mathcal{R}'$ runs in time $T$ and outputs a vector $\hat{x}$ of support size $S$, then $\mathcal{R}$ runs in time $O(T + S)$.
\end{theorem}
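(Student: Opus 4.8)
The plan is to keep the measurement matrix $\Phi$ unchanged and to define $\mathcal{R}(y)$ by post-processing the output of $\mathcal{R}'$: set $w = \mathcal{R}'(y)$ and let $\mathcal{R}(y) = \hat{x} := H_{2k}(w)$ be the hard thresholding of $w$ to its $2k$ largest-magnitude coordinates. Computing $\hat{x}$ from $w$ takes $O(S)$ time via a partial selection (where $S = \|w\|_0$ is the support size in the statement), so the running time $O(T+S)$ is immediate and all the difficulty lies in the error guarantee. Thresholding to only $k$ coordinates provably fails --- if $x$ is $(k+1)$-sparse the hypothesis forces $w = x$ (since then $\sigma_{2k}(x)_q = 0$) yet $H_k(w)$ deletes a genuine signal entry --- which is why I keep $2k$ coordinates and exploit the $\sigma_{2k}$ appearing on the right-hand side of the hypothesis.

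I will first record a Stechkin-type inequality that trades sparsity for norm: for $0 < s \le t$, $\sigma_{2k}(x)_t \le c_{s,t}\, k^{1/t - 1/s}\,\sigma_k(x)_s$, proved by sorting the coordinates of $x$, bounding $|x_{(i)}| \le (i-k)^{-1/s}\sigma_k(x)_s$ for $i > k$, and summing the tail $\sum_{i > 2k}(i-k)^{-t/s}$; the gap between $2k$ and $k$ is exactly what yields the factor $k^{1/t-1/s}$. Applying this with $t = q$ inside the hypothesis and collapsing $k^{1/p-1/q}\cdot k^{1/q - 1/s} = k^{1/p - 1/s}$ gives the intermediate bound $\|x - w\|_p \le C'' k^{1/p - 1/s}\sigma_k(x)_s$, which uses $q \ge s$.

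The core estimate decomposes the error over the support $T$ of $\hat{x}$. Because $\hat{x} = w$ on $T$ and $\hat{x} = 0$ off $T$, the disjoint supports give the exact identity $\|x - \hat{x}\|_r^r = \|(x-w)|_T\|_r^r + \|x|_{T^c}\|_r^r$. The first term is supported on a set of size $2k$, so the power-mean inequality yields $\|(x-w)|_T\|_r \le (2k)^{1/r - 1/p}\|x-w\|_p$, and inserting the intermediate bound makes this exactly of order $k^{1/r - 1/s}\sigma_k(x)_s$. The essential point is that I pass from $\|\cdot\|_p$ to $\|\cdot\|_r$ only on sets of size $O(k)$, never on the full large-support vector $x - w$.

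The main obstacle is the second term $\|x|_{T^c}\|_r$, since $T$ is the top-$2k$ set of $w$ rather than of $x$. Let $\tilde{S}$ denote the top-$2k$ set of $x$ and split $T^c = (T^c \cap \tilde{S}^c) \cup (\tilde{S}\setminus T)$. On $T^c \cap \tilde{S}^c$ we have $\|x|_{T^c\cap\tilde{S}^c}\|_r \le \sigma_{2k}(x)_r$, which Stechkin bounds by $c\,k^{1/r-1/s}\sigma_k(x)_s$ (here using $r \ge s$). The set $\tilde{S}\setminus T$ holds true head coordinates that thresholding missed, and I will control it by a swapping argument: as $|\tilde{S}\setminus T| = |T\setminus\tilde{S}|$, fix a bijection $\phi$ between these two sets, and for $i \in \tilde{S}\setminus T$ with $j = \phi(i) \in T\setminus\tilde{S}$ note $|w_i| \le |w_j|$ (since $j$ lies in, and $i$ outside, the top-$2k$ of $w$), so that $|x_i| \le |x_i - w_i| + |w_j - x_j| + |x_j|$. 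Summing $r$-th powers, using that $\phi$ is a bijection, that $T\setminus\tilde{S}\subseteq\tilde{S}^c$ is a tail set, and that both sets have size at most $2k$, bounds $\|x|_{\tilde{S}\setminus T}\|_r$ by a constant multiple of $(2k)^{1/r-1/p}\|x-w\|_p + \sigma_{2k}(x)_r$, again of the required order. Combining the three pieces (absorbing the elementary constant $3^{\max(r-1,0)}$ that arises from merging three summands at exponent $r$) gives $\|x-\hat{x}\|_r \le C\,k^{1/r-1/s}\sigma_k(x)_s$. I expect the swapping argument, together with the careful tracking of quasi-norm constants when $r < 1$, to be the most delicate part of the write-up.
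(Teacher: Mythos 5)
Your proposal is correct, and its analysis takes a genuinely different route from the paper's. The algorithm is identical (keep the top $2k$ coordinates of $w=\mathcal{R}'(y)$), and your opening identity $\|x-\hat{x}\|_r^r=\|(x-w)_T\|_r^r+\|x_{T^c}\|_r^r$ coincides with the paper's first step in its $r\le 1$ case; but from there the paper runs two separate arguments. For $r\le 1$ it uses a telescoping chain of quasinorm triangle inequalities at exponent $r$, whose crux is the norm-level comparison $\|w_B\|_r^r\ge\|w_A\|_r^r$ (in the paper's notation $B$ is your $T$, the top-$2k$ set of $w$, and $A$ is your $\tilde S$, that of $x$); for $r>1$, where that chain breaks because $\|\cdot\|_r^r$ is no longer subadditive, it switches to a different argument based on the best-approximation property $\|w-w_B\|_p\le\|w-x_A\|_p$. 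You instead handle all $r>0$ by one mechanism: the coordinate-level swap given by the bijection $\phi$ between $\tilde S\setminus T$ and $T\setminus\tilde S$ together with $|w_i|\le|w_{\phi(i)}|$, with only the benign constant $3^{\max(r-1,0)}$ distinguishing the quasinorm regime. That uniformity --- one argument instead of two --- is what your route buys. What the paper's route buys is cleaner constants: its shelling lemma, the counterpart of your Stechkin inequality, holds with constant exactly $1$ for all $0<a\le b$, whereas your Stechkin constant behaves like $(t/s-1)^{-1/t}$, which degenerates as $t\to s^+$, and the tail-sum proof outright diverges at $t=s$. Since the theorem permits $q=s$ and $r=s$ (the paper's $\ell_p/\ell_p$ corollary takes exactly $r=s$), in the full write-up you should dispatch those boundary cases by the trivial monotonicity $\sigma_{2k}(x)_t\le\sigma_k(x)_t$ rather than by the tail sum; this is a two-line patch, not a gap, because the constants are allowed to depend on the exponents.
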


The reduction of \Theorem{main} is very simple: the recovery algorithm $\mathcal{R}$, given $\Phi x$, first computes $\mathcal{R}'(\Phi x)$. We then let $\mathcal{R}(\Phi x)$ be defined by projecting $\mathcal{R}'(\Phi x)$ to its largest $2k$ coordinates (in magnitude). This scheme is analyzed in \Section{main} (see \Theorem{main-thm}).

It is known how to achieve the $\ell_2/\ell_1$ guarantee with $m \simeq k\log(n/k)$ measurements and a $\mathop{poly}(n)$-time recovery algorithm $\mathcal{R}$ \cite{Candes08} (more accurately, there exists a deterministic algorithm $\mathcal{R}$ such that if $\Phi$ is drawn at random from a particular distribution with that number $m$ of measurements, then with probability $1 - \mathop{poly}(1/\binom{n}{k})$, $(\Phi, \mathcal{R})$ is a scheme achieving the $\ell_2/\ell_1$ recovery guarantee). Thus by setting $r = s = p\le 1$ in \Theorem{main}, we obtain the following corollary.

\begin{corollary}\CorollaryName{small-p}
For any $0<p\le 1$, there exists a scheme achieving the $\ell_p/\ell_p$ recovery guarantee with $m \lesssim k\log(n/k)$ measurements and recovery time $\mathop{poly}(n)$.
\end{corollary}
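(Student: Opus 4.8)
The plan is to apply \Theorem{main} as a black box: I would feed it the known $\ell_2/\ell_1$ scheme of \cite{Candes08} as the hypothesized scheme $(\Phi,\mathcal{R}')$ and read off the desired $\ell_p/\ell_p$ scheme from its conclusion. Since the theorem already performs all of the analytic work, the corollary reduces to choosing the four parameters $p,q,r,s$ of \Theorem{main} correctly and verifying the constraints $p\ge r\ge s>0$ and $q\ge s$. To keep the two roles of the letter $p$ distinct, I would write $\pi$ for the target exponent of the corollary (so $0<\pi\le 1$), reserving $p,q,r,s$ for their meaning inside \Theorem{main}.

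First I would recast the $\ell_2/\ell_1$ guarantee in the exact form required by the hypothesis of \Theorem{main}, namely $\|x-\mathcal{R}'(y)\|_p\le C'k^{1/p-1/q}\sigma_{2k}(x)_q$, by taking $p=2$ and $q=1$; the exponent is then $1/2-1=-1/2$, so the hypothesis reads $\|x-\mathcal{R}'(y)\|_2\le C'k^{-1/2}\sigma_{2k}(x)_1$. The scheme of \cite{Candes08} supplies, for any target sparsity $K$, the bound $\|x-\mathcal{R}'(y)\|_2\le C_0 K^{-1/2}\sigma_K(x)_1$ with $m\simeq K\log(n/K)$ measurements; instantiating it at $K=2k$ yields precisely the required inequality with $C'=C_0/\sqrt{2}$, while the measurement count $m\simeq 2k\log(n/2k)\lesssim k\log(n/k)$ stays within budget.

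Next I would invoke \Theorem{main} at $p=2$, $q=1$, and $r=s=\pi$. The conclusion becomes $\|x-\mathcal{R}(y)\|_\pi\le C\,k^{1/\pi-1/\pi}\sigma_k(x)_\pi=C\,\sigma_k(x)_\pi$, which is exactly the $\ell_\pi/\ell_\pi$ guarantee (the exponent $c=1/\pi-1/\pi$ vanishes, as it must for a $p=q$ guarantee). The constraints hold for every $0<\pi\le1$: the chain $p\ge r\ge s>0$ reads $2\ge\pi\ge\pi>0$, and $q\ge s$ reads $1\ge\pi$. Finally, the running-time claim follows from the ``furthermore'' clause of \Theorem{main}: since $\mathcal{R}'$ runs in $T=\mathop{poly}(n)$ and outputs a vector supported on at most $S\le n$ coordinates, $\mathcal{R}$ runs in $O(T+S)=\mathop{poly}(n)$.

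I do not anticipate a genuine obstacle, as the corollary is essentially bookkeeping once \Theorem{main} is available; the one point needing care is the factor-of-two shift in sparsity, which converts the $\sigma_k$ error appearing in the textbook statement of the $\ell_2/\ell_1$ guarantee into the $\sigma_{2k}$ error that \Theorem{main} expects, at the cost of only a constant factor in both $C'$ and $m$.
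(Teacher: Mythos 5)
Your proposal is correct and follows exactly the paper's route: the paper likewise obtains this corollary by feeding the $\ell_2/\ell_1$ scheme of \cite{Candes08} into \Theorem{main} with $r = s = p \le 1$ (and $p=2$, $q=1$ in the theorem's notation). Your treatment is in fact slightly more careful than the paper's one-line derivation, since you explicitly handle the $\sigma_k$-versus-$\sigma_{2k}$ mismatch by instantiating the $\ell_2/\ell_1$ scheme at sparsity $2k$ and verify that this costs only constant factors in $C'$ and $m$.
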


We point out that although \Corollary{small-p} does not achieve the $C_1(p)k + C_2(p)k\log(n/k)$ measurements for $C_2(p)\rightarrow 0$ that was shown achievable (with an inefficient recovery algorithm) in \cite{ChartrandS08,ShenL10}, it achieves a number of measurements that is much less than the $m \simeq k^{2/p}\log n$ of \cite{SomaY16}, even for $p=1$.

\subsection{Nonuniform guarantees}\SectionName{nonuniform}

Up until this point we have only discussed {\em uniform} recovery guarantees (this terminology appears in e.g.\ \cite{FoucartR13}). A uniform scheme is one where \Equation{uniform} holds for all $x\in\R^n$ simultaneously, for single pair $(\Phi, \mathcal{R})$. Indeed many such schemes pick $\Phi$ randomly, but then the desired guarantee for uniformity is
$$
\Pr_{\Phi}(\forall x\in\R^n,\ \Eqsub{uniform}\text{ holds}) \ge 1-\delta .
$$
A {\em nonuniform} recovery guarantee is one where, in a randomized scheme (in which $\Phi$, and possibly also $\mathcal{R}$, are chosen at random from some distribution)
$$
\forall x\in\R^n \Pr_{\Phi, \mathcal{R}}(\Eqsub{uniform}\text{ holds}) \ge 1-\delta .
$$
Uniform and nonuniform schemes are also sometimes called ``for all'' and ``for each'' schemes in the literature. It is known for example that the $\ell_2/\ell_2$ guarantee cannot be achieved uniformly by {\em any} (even possibly inefficient) recovery algorithm unless $m \gtrsim n$ \cite{CohenDD08}, however it is achievable by a nonuniform scheme with failure probability $1/\mathop{poly}(n)$, $m\simeq k\log n$, recovery time $T \lesssim n\log n$, and output sparsity $S = \|\hat{x}\|_0 \lesssim k$ by combining the CountSketch \cite{CharikarCF04} with a reduction from $\ell_2/\ell_2$ recovery to the $\ell_2$ heavy hitters problem \cite{CormodeM06} (see also \cite[Section II.B]{GilbertI10}). One could also replace the CountSketch with the more efficient ExpanderSketch to keep all parameters the same while reducing the recovery time to $T \lesssim k\log^c n$. We note that to achieve $C = 1+\eps$, the work of \cite{GilbertLPS10} achieves a better bound on $m$ than the ExpanderSketch in terms of $\eps$ by a factor of $1/\eps$, albeit with a worse value of $S$ and only constant failure probability. However, since \Theorem{main} only preserves $C$ up to a constant factor, it cannot be used to convert a scheme with $C = 1+O(\eps)$ for one recovery guarantee into a scheme with $C = 1+\eps$ for another guarantee.

We now mention that \Theorem{main} holds regardless of whether $(\Phi, \mathcal{R}')$ is uniform or nonuniform scheme, and the reduction is uniformity-preserving. Thus by combining the observations of the last paragraph with \Theorem{main}, we obtain the following corollary.

\begin{corollary}\CorollaryName{nonuniform-small-p}
For any $0<p\le 1$, there exists a nonuniform scheme achieving the $\ell_p/\ell_p$ recovery guarantee with $m \lesssim k\log n$ measurements, recovery time $k\cdot \mathop{poly}(\log n)$, and failure probability $1/\mathop{poly}(n)$.
\end{corollary}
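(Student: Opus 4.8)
The plan is to combine the nonuniform $\ell_2/\ell_2$ scheme described in \Section{nonuniform} with \Theorem{main}, crucially using the fact that the reduction of \Theorem{main} is uniformity-preserving. First I would start from the nonuniform $\ell_2/\ell_2$ scheme obtained by feeding the ExpanderSketch into the reduction from $\ell_2/\ell_2$ recovery to the $\ell_2$ heavy hitters problem, instantiated at sparsity $2k$. This produces a scheme $(\Phi, \mathcal{R}')$ with $m \lesssim k\log n$ measurements, recovery time $T \lesssim k\cdot\mathop{poly}(\log n)$, output sparsity $S = \|\mathcal{R}'(y)\|_0 \lesssim k$, and the property that for each fixed $x$, $\|x - \mathcal{R}'(y)\|_2 \le C'\sigma_{2k}(x)_2$ holds with probability $1 - 1/\mathop{poly}(n)$.

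Next I would apply \Theorem{main}. In the notation of that theorem, set its source exponents both equal to $2$ and its target exponents both equal to $p$; the required inequalities $2 \ge p \ge p > 0$ and $2 \ge p$ hold for every $0 < p \le 1$. Since $k^{1/2 - 1/2} = 1$, the hypothesis of \Theorem{main} is exactly the $\ell_2/\ell_2$ guarantee of the previous paragraph, and since $k^{1/p - 1/p} = 1$, its conclusion is precisely the $\ell_p/\ell_p$ guarantee $\|x - \mathcal{R}(y)\|_p \le C\,\sigma_k(x)_p$. Because the reduction is uniformity-preserving and is a deterministic post-processing of the output of $\mathcal{R}'$ (projection onto the top $2k$ coordinates), the resulting scheme $(\Phi, \mathcal{R})$ remains nonuniform and inherits the failure probability $1/\mathop{poly}(n)$.

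Finally I would track the remaining parameters. The measurement matrix $\Phi$ is unchanged by the reduction, so $m \lesssim k\log n$. By the running-time clause of \Theorem{main}, $\mathcal{R}$ runs in time $O(T + S) \lesssim k\cdot\mathop{poly}(\log n)$. Together these give all three claimed bounds.

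I do not expect any substantive obstacle here; the proof is essentially a matter of matching exponents and invoking uniformity-preservation. The one place I would be careful is the bookkeeping around $\sigma_{2k}$ versus $\sigma_k$: the hypothesis of \Theorem{main} is stated in terms of $\sigma_{2k}$, so the base $\ell_2/\ell_2$ scheme must be instantiated at sparsity $2k$ rather than $k$. This inflates $m$, $T$, and $S$ only by constant factors and hence leaves every stated asymptotic bound unaffected.
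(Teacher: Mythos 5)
Your proposal is correct and follows exactly the paper's intended argument: take the nonuniform $\ell_2/\ell_2$ scheme from the ExpanderSketch combined with the heavy-hitters reduction (with $m \lesssim k\log n$, $T \lesssim k\log^c n$, $S \lesssim k$), then apply \Theorem{main} with source exponents $2,2$ and target exponents $p,p$, using that the reduction is uniformity-preserving and only adds $O(S)$ time. Your extra care about instantiating the base scheme at sparsity $2k$ to match the $\sigma_{2k}$ hypothesis is a detail the paper glosses over, and it is handled correctly.
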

\section{Main result}\SectionName{main}

Given $v \in \reals^n$ and $S \subseteq [n]$ we will use $v_S$ to denote the vector where $(v_S)_i = v_i$ if $i \in S$ and 0 otherwise.

\begin{fact} \label{fact:lp_ba}
For any $v \in \reals^n$ and $0 < a \leq b$, $\|v\|_b \leq \|v\|_a$.
\end{fact}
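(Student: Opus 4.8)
The plan is to exploit the positive homogeneity of the map $v \mapsto \|v\|_p$ (valid for every $p > 0$, even when $\|\cdot\|_p$ is only a quasi-norm for $p < 1$) in order to reduce to a normalized case. Since $\|cv\|_p = |c|\,\|v\|_p$ for all $p > 0$, and the claimed inequality $\|v\|_b \le \|v\|_a$ is invariant under rescaling $v$, I may dispose of the trivial case $v = 0$ and then assume, after rescaling, that $\|v\|_a = 1$. It then suffices to prove $\|v\|_b \le 1$.

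Under the normalization $\|v\|_a = 1$ we have $\sum_i |v_i|^a = 1$, which forces $|v_i|^a \le 1$, and hence $|v_i| \le 1$, for every coordinate $i$. The key elementary observation is that for a real number $t \in [0,1]$ and exponents $0 < a \le b$ one has $t^b \le t^a$, since raising a number in $[0,1]$ to a larger power only decreases it. Applying this coordinatewise gives $|v_i|^b \le |v_i|^a$ for each $i$, and summing yields $\sum_i |v_i|^b \le \sum_i |v_i|^a = 1$. Taking $b$-th roots (which preserves the inequality since $b > 0$) gives $\|v\|_b \le 1 = \|v\|_a$, as desired.

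I expect no serious obstacle: this is the standard monotonicity of $\ell_p$ quasi-norms in finite dimensions. The only point requiring a little care is that $\|\cdot\|_p$ is not a genuine norm for $p < 1$, so I deliberately avoid any appeal to the triangle inequality, Hölder, or convexity, and rely solely on homogeneity together with the pointwise inequality $t^b \le t^a$ on $[0,1]$. An alternative, equally short route avoids normalization entirely: setting $w_i = |v_i|^a \ge 0$ and $t = b/a \ge 1$, the claim becomes $(\sum_i w_i^t)^{1/t} \le \sum_i w_i$, which follows by checking that each term $(w_i / \sum_j w_j)^t \le w_i / \sum_j w_j$ and summing. I would present the normalization version as the main argument, as it is the most transparent.
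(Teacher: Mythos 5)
Your proof is correct and uses essentially the same idea as the paper's: both arguments exploit scale-invariance to normalize $v$ and then apply the pointwise inequality $t^b \le t^a$ for $t \in [0,1]$ coordinatewise. The only cosmetic difference is that you normalize so that $\|v\|_a = 1$ and bound $\|v\|_b$ from above, while the paper divides through by $\|v\|_b$ and bounds $\|v\|_a$ from below; the underlying mechanism is identical.
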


\begin{proof}
Observe that for all $i \in [n]$, $\frac{|v_i|}{\|v_i\|_b} \leq 1$.  This, together with the fact that $a \leq b$, gives us
\begin{equation} \label{eq:normalized}
\left(\frac{|v_i|}{\|v\|_b}\right)^b \leq \left(\frac{|v_i|}{\|v\|_b}\right)^a.
\end{equation}

We now have
\begin{align*}
\|v\|_a &= \left(\sum_{i=1}^n |v_i|^a\right)^\frac{1}{a}\\
&\geq \left(\sum_{i=1}^n \left(\frac{|v_i|}{\|v\|_b}\right)^b \|v\|_b^a\right)^\frac{1}{a}\text{ (\Equation{normalized})}\\
&= \|v\|_b^{1-\frac{b}{a}} \left(\sum_{i=1}^n |v_i|^b \right)^\frac{1}{a} \\
&= \|v\|_b.
\end{align*}
\end{proof}

\begin{fact} \label{fact:lp_ab}
For any $v \in \reals^n$ and $0 < a \leq b$, $\|v\|_a \leq n^{\frac{1}{a}-\frac{1}{b}} \|v\|_b$. In particular, if $v$ is $\kappa$-sparse then $\|v\|_a \leq \kappa^{\frac{1}{a}-\frac{1}{b}} \|v\|_b$.
\end{fact}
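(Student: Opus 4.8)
The plan is to derive the inequality directly from Hölder's inequality by pairing the vector of $a$-th powers against the all-ones vector. Concretely, I would start from $\|v\|_a^a = \sum_{i=1}^n |v_i|^a = \sum_{i=1}^n |v_i|^a \cdot 1$ and apply Hölder with the conjugate pair of exponents $b/a$ and $\frac{b}{b-a}$. These are valid positive exponents precisely because $a \leq b$, and they are conjugate since $\frac{a}{b} + \frac{b-a}{b} = 1$. The first factor then becomes $\left(\sum_i (|v_i|^a)^{b/a}\right)^{a/b} = \|v\|_b^a$, and the second becomes $\left(\sum_i 1\right)^{(b-a)/b} = n^{1-a/b}$.

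Combining these and taking $a$-th roots turns $\|v\|_a^a \leq n^{1-a/b}\|v\|_b^a$ into $\|v\|_a \leq n^{(1-a/b)/a}\|v\|_b = n^{\frac{1}{a}-\frac{1}{b}}\|v\|_b$, which is exactly the claim. Before invoking Hölder I would dispose of the degenerate cases: if $v = 0$ both sides vanish, and if $a = b$ the exponent is $n^0 = 1$ and the statement reduces to equality, so I may assume $a < b$ and $v \neq 0$ for the main computation. An equivalent route I would mention is to use convexity of $t \mapsto t^{b/a}$ (convex since $b/a \geq 1$) with Jensen's inequality applied to the uniform average $\frac{1}{n}\sum_i |v_i|^a$, which yields $\left(\frac{1}{n}\|v\|_a^a\right)^{b/a} \leq \frac{1}{n}\|v\|_b^b$ and hence the same bound after rearranging.

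For the ``in particular'' clause, I would restrict attention to the support $S = \{i : v_i \neq 0\}$, which has $|S| = \|v\|_0 \leq \kappa$. Since the zero coordinates contribute nothing to either norm, applying the first part to $v$ viewed as a vector of dimension $|S|$ gives $\|v\|_a \leq |S|^{\frac{1}{a}-\frac{1}{b}}\|v\|_b \leq \kappa^{\frac{1}{a}-\frac{1}{b}}\|v\|_b$, where the last step uses $\frac{1}{a}-\frac{1}{b} \geq 0$ so that the bound is monotone in the dimension. There is no real obstacle here; the only care needed is the bookkeeping of the exponents so that $n^{\frac{1}{a}-\frac{1}{b}}$ emerges correctly, together with the observation that the sparse case follows by simply replacing the ambient dimension $n$ with the support size.
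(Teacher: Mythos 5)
Your proof is correct and follows essentially the same route as the paper: both apply H\"older's inequality with exponents $b/a$ and $\frac{b}{b-a}$ to the pairing of $(|v_i|^a)_i$ against the all-ones vector, yielding $\|v\|_a^a \leq \|v\|_b^a \cdot n^{1-a/b}$ and the claim after taking $a$-th roots. Your explicit treatment of the $\kappa$-sparse case via restriction to the support is a welcome addition (the paper leaves it implicit), but it does not change the substance of the argument.
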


\begin{proof}
H\"{o}lder's inequality states that for $p \geq 1$,
$$\|fg|\|_1 \leq \|f\|_p \|g\|_{\frac{p}{p-1}}.$$

We will choose $f,g \in \reals^n$ such that for all $i \in [n]$, $|f_i| = |v_i|^a$ and $|g_i| = 1$.  Letting $p = b/a$, we then have
$$\|v\|_a^a = \|f g\|_1 \leq \|f\|_{\frac{b}{a}} \|g\|_{\frac{b}{b-a}} = \|v\|_b^{a} \cdot n^{1-\frac{a}{b}},$$
and thus
$$\|v\|_a \leq \|v\|_b \cdot n^{\frac{1}{a}-\frac{1}{b}}$$
as desired.
\end{proof}

The following lemma is via a common technique in compressed sensing that has come to be known as {\em shelling}, in which one sorts coordinates of a vector by magnitude of entries, blocks consecutive groups of coordinates together, then compares some norm of one group with some norm of the previous group.

\begin{lemma} \label{lem:shelling}
For any $v \in \reals^n$ and $0 < a \leq b$,
$$\|v_{\tail(2\kappa)}\|_b \leq \kappa^{\frac{1}{b} - \frac{1}{a}} \|v_{\tail(\kappa)}\|_a.$$
\end{lemma}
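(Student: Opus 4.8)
The plan is to use the shelling idea in its simplest form: rather than comparing many consecutive blocks, I will compare every coordinate of the tail against a single threshold determined by the block that separates $\tail(\kappa)$ from $\tail(2\kappa)$. Throughout, interpret $v_{\tail(\kappa)}$ as $v$ with its $\kappa$ largest-magnitude coordinates zeroed out, and assume (without loss of generality, by relabeling) that the coordinates of $v$ are sorted in nonincreasing order of magnitude, so that $\tail(\kappa) = \{\kappa+1,\ldots,n\}$ and $\tail(2\kappa) = \{2\kappa+1,\ldots,n\}$.

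First I would introduce the threshold $t \eqdef \min_{j \in B} |v_j|$, where $B \eqdef \{\kappa+1,\ldots,2\kappa\}$ is the intermediate block; equivalently $t$ is the $(2\kappa)$-th largest magnitude, so that every coordinate $j \in \tail(2\kappa)$ satisfies $|v_j| \le t$. The key quantitative step is to bound $t$ from above by an average over $B$. Since $B$ consists of exactly $\kappa$ coordinates, each of magnitude at least $t$, and since $B \subseteq \tail(\kappa)$, I obtain $\kappa\, t^a \le \sum_{j \in B} |v_j|^a \le \|v_{\tail(\kappa)}\|_a^a$, hence $t \le \kappa^{-1/a}\|v_{\tail(\kappa)}\|_a$. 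This is the one place where the factor of $2$ in $\tail(2\kappa)$ is essential: it guarantees a full block of $\kappa$ coordinates lying inside $\tail(\kappa)$ but above the tail, which is exactly what powers the averaging bound.

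Next I would estimate the left-hand side directly from the definition, writing $\|v_{\tail(2\kappa)}\|_b^b = \sum_{j \in \tail(2\kappa)} |v_j|^b$. Using $b \ge a$ together with the pointwise bound $|v_j| \le t$, I factor $|v_j|^b = |v_j|^a\,|v_j|^{b-a} \le t^{b-a}|v_j|^a$ and sum to get $\|v_{\tail(2\kappa)}\|_b^b \le t^{b-a}\|v_{\tail(2\kappa)}\|_a^a \le t^{b-a}\|v_{\tail(\kappa)}\|_a^a$, where the last inequality uses $\tail(2\kappa) \subseteq \tail(\kappa)$. Substituting the bound $t \le \kappa^{-1/a}\|v_{\tail(\kappa)}\|_a$ and collecting exponents yields $\|v_{\tail(2\kappa)}\|_b^b \le \kappa^{-(b-a)/a}\|v_{\tail(\kappa)}\|_a^b$; taking $b$-th roots and simplifying $-(b-a)/(ab) = \tfrac{1}{b}-\tfrac{1}{a}$ gives the claim.

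I do not expect a genuine obstacle; the main thing to get right is the bookkeeping of exponents, and, more importantly, avoiding the triangle inequality entirely, since $\|\cdot\|_b$ need not be a norm when $b < 1$. Working throughout with the raw sum $\sum |v_j|^b$ and the pointwise bound $|v_j| \le t$ sidesteps that issue cleanly, so the argument remains valid for all $0 < a \le b$.
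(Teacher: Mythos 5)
Your proof is correct, but it is structured differently from the paper's. The paper runs the full shelling argument: it partitions the tail into consecutive blocks $B_2, B_3, \ldots$ of size $\kappa$, bounds each coordinate of $B_j$ by the $a$-th-power average over the previous block $B_{j-1}$, and then needs an extra norm-comparison step (its Fact~\ref{fact:lp_ba}, applied to the sequence of block norms) to collapse $\bigl(\sum_j \|v_{B_j}\|_a^b\bigr)^{1/b}$ into $\|v_{\tail(\kappa)}\|_a$. You instead use a single threshold $t = |v_{2\kappa}|$: every coordinate of $\tail(2\kappa)$ is at most $t$, the averaging bound $t \le \kappa^{-1/a}\|v_{\tail(\kappa)}\|_a$ comes from the one intermediate block $\{\kappa+1,\ldots,2\kappa\}$, and the factorization $|v_j|^b \le t^{b-a}|v_j|^a$ plus subset monotonicity of $\sum_j |v_j|^a$ finishes the job. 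Your route is more elementary --- it needs no block decomposition of the tail and no analogue of Fact~\ref{fact:lp_ba} --- and it works uniformly for all $0 < a \le b$ with no quasinorm issues, yielding exactly the same constant $\kappa^{\frac{1}{b}-\frac{1}{a}}$. What the paper's version buys is mainly pedagogical and structural: block-by-block shelling is the form of the technique that recurs elsewhere in compressed sensing (e.g.\ in RIP-based analyses, where one genuinely needs to control each block separately), whereas your single-threshold trick is tailored to this particular tail-to-tail comparison. For the lemma as stated, your argument is a clean simplification with no loss.
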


\begin{proof}
Let $i_1,\ldots,i_n$ be a permutation of $[n]$ such that $|v_{i_1}| \le |v_{i_2}| \le \ldots \le |v_{i_n}|$. For $j\ge 0$ an integer define $B_j = \{i_{jk+1}, i_{jk+2}, \ldots, i_{(j+1)k}\}$. Then $v_{tail(2\kappa)}$ is simply $v_{B\backslash (B_0\cup B_1)}$, so that
\begin{equation}
\|v_{tail(2\kappa)}\|_b = \left(\sum_{j\ge 2} \|v_{B_j}\|_b^b \right)^{\frac{1}{b}}. \label{eq:about-to-shell}
\end{equation}
Next observe that for any $i\in B_j$, $|v_i| \le \|v_{B_{j-1}}\|_a/\kappa^{1/a}$. This follows since $\|v_{B_{j-1}}\|_a^a/\kappa$ is the average $a$th power of all $|v_{i'}|$ in $B_{j-1}$, and all terms in this average are at least as big as $|v_i|^a$. Combining with \Equation{about-to-shell}
\begin{align*}
\|v_{\tail(2\kappa)}\|_b &\leq \left(\sum_{j\ge 2} \frac{\kappa}{\kappa^{b/a}}\|v_{B_{j-1}}\|_a^b\right)^{\frac{1}{b}}\\
{}&= \kappa^{\frac{1}{b} - \frac{1}{a}} \cdot \left(\sum_{j\ge 1} \|v_{B_j}\|_a^b\right)^{\frac{1}{b}}\\
{}&\le \kappa^{\frac{1}{b} - \frac{1}{a}} \cdot \left(\sum_{j\ge 1} \|v_{B_j}\|_a^a\right)^{\frac{1}{a}} \text{ (Fact~\ref{fact:lp_ba})}\\
{}&= \kappa^{\frac{1}{b} - \frac{1}{a}} \cdot \|v_{\tail(\kappa)}\|_a.
\end{align*}
\end{proof}

Now we have all the tools to prove our main theorem.

\begin{theorem}\TheoremName{main-thm}
For $p \geq r \geq s > 0$ and $q \geq s$, if there exists a recovery scheme $(\Phi, \mathcal{R}')$ such that for some constant $C'_{p,q}$
$$\|x-\mathcal{R}'(\Phi x)\|_p \leq C'_{p,q} \cdot k^{\frac{1}{p}-\frac{1}{q}} \|x_{\tail(2k)}\|_q,$$
then there exists a recovery scheme $(\Phi, \mathcal{R})$ such that for some constant $C_{p,q,r}$
$$\|x-\mathcal{R}(\Phi x)\|_r \leq C_{p,q,r} \cdot k^{\frac{1}{r}-\frac{1}{s}} \|x_{\tail(k)}\|_s.$$
Furthermore, if $(\Phi, \mathcal{R}')$ achieves a uniform guarantee, then so does $(\Phi,\mathcal{R})$. Also, if $\mathcal{R}'$ runs in time $T$ and outputs a vector of support size $S$, then $\mathcal{R}$ runs in time $O(T + S)$.
\end{theorem}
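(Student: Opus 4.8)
The plan is to analyze the reduction described after Theorem~\ref{thm:main}: set $w \eqdef \mathcal{R}'(\Phi x)$, let $T$ be the set of $2k$ coordinates of largest magnitude in $w$, and output $\hat x \eqdef w_T$. Since $\hat x$ agrees with $w$ on $T$ and vanishes off $T$, the error splits exactly over disjoint coordinate blocks as
$$\|x - \hat x\|_r^r = \|(x-w)_T\|_r^r + \|x_{\bar T}\|_r^r, \qquad \bar T \eqdef [n]\setminus T.$$
I would bound the two pieces separately, aiming for $\lesssim k^{1/r - 1/s}\|x_{\tail(k)}\|_s$ in each. The final constant $C_{p,q,r}$ will absorb all the $O(1)$ factors, including the quasi-triangle constant $\kappa_r$ with $|a+b|^r \le \kappa_r(|a|^r + |b|^r)$, which I need because $\|\cdot\|_r$ is not a norm when $r<1$.

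The first piece is immediate from the hypothesis. As $(x-w)_T$ is $2k$-sparse and $r \le p$, Fact~\ref{fact:lp_ab} gives $\|(x-w)_T\|_r \le (2k)^{1/r-1/p}\|x-w\|_p$, and the assumed guarantee bounds $\|x-w\|_p$ by $C'_{p,q}k^{1/p-1/q}\|x_{\tail(2k)}\|_q$. Lemma~\ref{lem:shelling} (with $b=q \ge a=s$, using $q\ge s$) then converts $\|x_{\tail(2k)}\|_q \le k^{1/q-1/s}\|x_{\tail(k)}\|_s$, and the exponents telescope to $k^{1/r-1/s}$ exactly.

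The substantive step is $\|x_{\bar T}\|_r$, and this is where I expect the main obstacle: $x_{\bar T}$ is not sparse, so I cannot cheaply pass from $\ell_r$ to $\ell_p$, and $T$ captures the large coordinates of $w$ rather than of $x$. Let $S_2$ be the top $2k$ coordinates of $x$, so $\bar{S_2} = \tail(2k)$, and split $\bar T = (\bar T \cap \bar{S_2}) \cup (S_2 \setminus T)$. The first part lies in $\tail(2k)$ and contributes at most $\|x_{\tail(2k)}\|_r$, handled by Lemma~\ref{lem:shelling} with $b=r\ge a=s$. The genuinely new term is $\|x_{S_2 \setminus T}\|_r$, the large-in-$x$ coordinates missed by $T$, which I would control by a swapping argument. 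Writing $D \eqdef S_2 \setminus T$ and $E \eqdef T \setminus S_2$, one has $|D| = |E|$, and since $T$ holds the $2k$ largest entries of $w$, every entry of $w$ on $D$ is dominated by every entry of $w$ on $E$; hence $\sum_{i\in D}|w_i|^r \le \|w_E\|_r^r$. Passing from $w$ back to $x$ on each side via the quasi-triangle inequality yields
$$\|x_D\|_r^r \lesssim \|w_E\|_r^r + \|(x-w)_D\|_r^r \lesssim \|x_E\|_r^r + \|(x-w)_{D\cup E}\|_r^r,$$
where $\|x_E\|_r \le \|x_{\tail(2k)}\|_r$ since $E \subseteq \bar{S_2}$, and $(x-w)_{D\cup E}$ is $4k$-sparse, so it is bounded exactly as in the first piece via Fact~\ref{fact:lp_ab}, the hypothesis, and Lemma~\ref{lem:shelling}.

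Collecting everything gives $\|x-\hat x\|_r \lesssim k^{1/r-1/s}\|x_{\tail(k)}\|_s$. Two bookkeeping points remain. For the running time, computing $T$ only requires selecting the $2k$ largest among the at most $S$ nonzero entries of $w$, which linear-time selection does in $O(S)$, so $\mathcal{R}$ runs in $O(T+S)$. For uniformity, the map $w \mapsto \hat x$ is a fixed deterministic post-processing, so whenever the hypothesis bound holds for a given $x$ (respectively for all $x$ simultaneously under the random draw of $\Phi$) the conclusion does too; hence the reduction preserves both nonuniform and uniform guarantees.
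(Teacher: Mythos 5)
Your proposal is correct, but it takes a genuinely different route from the paper's proof. The paper splits into two cases: for $r\le 1$ it works with the quasinorm powers $\|\cdot\|_r^r$ and runs a chain of add-and-subtract triangle inequalities through $\|x_B\|_r^r$, $\|w_B\|_r^r$, $\|w_A\|_r^r$, $\|x_A\|_r^r$, using the set-level comparison $\|w_B\|_r^r \ge \|w_A\|_r^r$ (where $B$ is the top $2k$ of $w$ and $A$ the top $2k$ of $x$); for $r>1$ it works with norms directly, using Minkowski plus the best-approximation property $\|w-w_B\|_p \le \|w-x_A\|_p$. You instead start from the exact disjoint-support identity $\|x-\hat x\|_r^r = \|(x-w)_T\|_r^r + \|x_{\bar T}\|_r^r$, split $\bar T$ against the top-$2k$ set of $x$, and control the ``missed large coordinates'' $D = S_2\setminus T$ by an element-wise swapping argument against $E = T\setminus S_2$ (using $|D|=|E|$ and that every $|w_i|$, $i\in D$, is dominated by every $|w_j|$, $j\in E$). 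This is a finer, pointwise version of the optimality-of-$T$ step that the paper invokes only at the level of whole sets, and it lets you handle all $r>0$ in a single unified argument at the cost of absorbing the constant $\kappa_r = \max(1,2^{r-1})$ (note your stated reason for needing $\kappa_r$ is slightly off: for $r<1$ the power inequality holds with $\kappa_r=1$; it is the case $r>1$ that forces $\kappa_r=2^{r-1}$). What each approach buys: the paper's case split yields cleaner explicit constants, especially for $r>1$ where it avoids any $2^{r-1}$-type loss by never raising the triangle inequality to the $r$-th power; your argument is more uniform and makes the geometry more transparent, and all exponent bookkeeping (Fact~\ref{fact:lp_ab} with $r\le p$, Lemma~\ref{lem:shelling} with $s\le q$ and $s\le r$, telescoping to $k^{1/r-1/s}$) checks out, as do the $O(T+S)$ running time via linear-time selection and the uniformity-preservation observation.
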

\begin{proof}
The recovery algorithm $\mathcal{R}$, given $\Phi x$, first computes $w = \mathcal{R}'(\Phi x)$. We then let $z = \mathcal{R}(\Phi x)$ be defined by projecting $w$ to its largest $2k$ coordinates (in magnitude).  It is clear that this reduction is uniformity-preserving, and it runs in time $O(T+S)$ since the $(2k)^{\mbox{th}}$ largest element of $w$ (in magnitude) can be found in $O(S)$ time using the linear-time selection algorithm of \cite{BlumFPRT72}.

We now analyze this scheme. Let $A\subseteq[n]$ be the largest $2k$ coordinates of $x$ in magnitude, and $B\subseteq[n]$ be the largest $2k$ coordinates of $w$ in magnitude (i.e. $z=w_B$).

First we will prove the case when $r \leq 1$.  We will use the fact that $\ell_r$ for $r \leq 1$ is a quasinorm, with $d(f,g) := \|f-g\|_r^r$ satisfying the triangle inequality.

\begin{align*}
\|x - z\|_r^r &\leq \|x_B - w_B\|_r^r + \|x - x_B\|_r^r\text{ (triangle inequality)}\\
{}&= \|x_B - w_B\|_r^r + \|x\|_r^r - \|x_B\|_r^r\\
{}&\leq \|x_B - w_B\|_r^r + \|x\|_r^r - (\|w_B\|_r^r - \|x_B - w_B\|_r^r)\text{ (triangle inequality)}\\
{}&= 2 \|x_B - w_B\|_r^r + \|x\|_r^r - \|w_B\|_r^r\\
{}&\leq  2 \|x_B - w_B\|_r^r + \|x\|_r^r - \|w_A\|_r^r\text{ (definition of }B\text{)}\\
{}&\leq  2 \|x_B - w_B\|_r^r + \|x\|_r^r - (\|x_A\|_r^r - \|w_A - x_A\|_r^r) \text{ (triangle inequality)}\\
{}&=  2 \|x_B - w_B\|_r^r + \|w_A - x_A\|_r^r + \|x_{\tail(2k)}\|_r^r\\
{}&\leq 2(2k)^{1-\frac{r}{p}} \|x_B-w_B\|_p^r + (2k)^{1-\frac{r}{p}} \|w_A - x_A\|_p^r + \|x_{\tail(2k)}\|_r^r \text{ (Fact~\ref{fact:lp_ab})}\\
{}&\leq 3(2k)^{1-\frac{r}{p}}\|x - w\|_p^r + \|x_{\tail(2k)}\|_r^r\\
{}&\leq 3 \cdot 2^{1-\frac{r}{p}} \left(C'_{p,q}\right)^r k^{1-\frac{r}{q}} \|x_{\tail(2k)}\|_q^r + \|x_{\tail(2k)}\|_r^r \text{ (by assumption)}\\
{}&\leq 3 \cdot 2^{1-\frac{r}{p}} \left(C'_{p,q}\right)^r k^{1-\frac{r}{s}} \|x_{\tail(k)}\|_s^r + k^{1-\frac{r}{s}}\|x_{\tail(k)}\|_s^r \text{ (Lemma~\ref{lem:shelling})}\\
{}&= \left(1+3 \cdot 2^{1-\frac{r}{p}} \left(C'_{p,q}\right)^r\right) k^{1-\frac{r}{s}} \|x_{\tail(k)}\|_s^r,
\end{align*}
which gives us that
$$\|x - z\|_r \leq C_{p,q,r} \cdot k^{\frac{1}{r}-\frac{1}{s}} \|x_{\tail(k)}\|_s$$
for $C_{p,q,r} = \left(1+3 \cdot 2^{1-\frac{r}{p}} \left(C'_{p,q}\right)^r\right)^{\frac{1}{r}}.$

Now we will prove the case when $r > 1$.  We will use the fact that $\|w - w_B\|_p = \min_{v, \|v\|_0 \leq 2k} \|w - v\|_p$ and in particular that
\begin{equation} \label{eq:w_xa}
\|w - w_B\|_p \leq \|w - x_A\|_p.
\end{equation}
\begin{align*}
\|x - z\|_r &\leq \|x - x_A\|_r + \|x_A - w_B\|_r\text{ (triangle inequality)}\\
&\leq \|x - x_A\|_r + (4k)^{\frac{1}{r}-\frac{1}{p}}\|x_A - w_B\|_p\text{ (Fact~\ref{fact:lp_ab})}\\
&\leq \|x - x_A\|_r + (4k)^{\frac{1}{r}-\frac{1}{p}}\left(\|x - x_A\|_p + \|x - w_B\|_p\right)\text{ (triangle inequality)}\\
&\leq \|x - x_A\|_r + (4k)^{\frac{1}{r}-\frac{1}{p}}\left(\|x - x_A\|_p + \|x - w\|_p + \|w - w_B\|_p\right)\text{ (triangle inequality)}\\
&\leq \|x - x_A\|_r + (4k)^{\frac{1}{r}-\frac{1}{p}}\left(\|x - x_A\|_p + \|x - w\|_p + \|w - x_A\|_p\right)\text{ (\Equation{w_xa})}\\
&\leq \|x - x_A\|_r + (4k)^{\frac{1}{r}-\frac{1}{p}}\left(2\|x - x_A\|_p + 2\|x - w\|_p\right)\text{ (triangle inequality)}\\
&= \|x_{\tail(2k)}\|_r + 2(4k)^{\frac{1}{r}-\frac{1}{p}} \|x_{\tail(2k)}\|_p + 2(4k)^{\frac{1}{r}-\frac{1}{p}} \|x - w\|_p\\
&\leq \|x_{\tail(2k)}\|_r + 2(4k)^{\frac{1}{r}-\frac{1}{p}} \|x_{\tail(2k)}\|_p + 2\cdot4^{\frac{1}{r}-\frac{1}{p}} C'_{p,q} \cdot k^{\frac{1}{r}-\frac{1}{q}} \|x_{\tail(2k)}\|_q\text{ (by assumption)}\\
&\leq \left(1+2\cdot4^{\frac{1}{r}-\frac{1}{p}}\left(1+C'_{p,q}\right) \right)k^{\frac{1}{r}-\frac{1}{s}}\|x_{\tail(k)}\|_s\text{ (Lemma~\ref{lem:shelling})},
\end{align*}
which satisfies our requirement when $C_{p,q,r} = 1+2\cdot4^{\frac{1}{r}-\frac{1}{p}}\left(1+C'_{p,q}\right).$
\end{proof}

Note that this reduction loses a factor of two in $k$ due to its application of Lemma~\ref{lem:shelling}.  In the case of $r \leq 1$, if $q = r = s$ then we don't have to apply Lemma~\ref{lem:shelling} and we can therefore avoid losing this factor of two.  This gives us the following corollary.

\begin{corollary}
For $p, 1 \geq q > 0$, if there exists a recovery scheme $(\Phi, \mathcal{R}')$ such that for some constant $C'_{p,q}$
$$\|x-\mathcal{R}'(\Phi x)\|_p \leq C'_{p,q} \cdot k^{\frac{1}{p}-\frac{1}{q}} \|x_{\tail(k)}\|_q,$$
then there exists a recovery scheme $(\Phi, \mathcal{R})$ such that for some constant $C_{p,q}$
$$\|x-\mathcal{R}(\Phi x)\|_q \leq C_{p,q} \cdot \|x_{\tail(k)}\|_q.$$
\end{corollary}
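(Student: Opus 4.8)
The plan is to specialize the $r \le 1$ branch of the proof of \Theorem{main-thm} to $r = s = q$ and to observe that in this regime the appeal to \Lemma{shelling} disappears, which is exactly why the factor of two in $k$ is saved. I would define $\mathcal{R}$ to output $z = w_B$, where $w = \mathcal{R}'(\Phi x)$ and $B \subseteq [n]$ is now the set of the \emph{largest $k$} coordinates of $w$ in magnitude (rather than the largest $2k$ used in \Theorem{main-thm}); as before this takes $O(T + S)$ time by the linear-time selection of \cite{BlumFPRT72} and is uniformity-preserving. Let $A \subseteq [n]$ be the largest $k$ coordinates of $x$.

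Since $q \le 1$, I would rerun the quasi-norm triangle-inequality chain of the $r \le 1$ case, using that $d(f,g) = \|f - g\|_q^q$ obeys the triangle inequality. Beginning from the disjoint-support identity $\|x - z\|_q^q = \|x_B - w_B\|_q^q + \|x\|_q^q - \|x_B\|_q^q$ and repeating the remaining steps — bounding $\|x_B\|_q^q$ below using $w_B$, invoking $\|w_B\|_q^q \ge \|w_A\|_q^q$ from the definition of $B$, and bounding $\|w_A\|_q^q$ below using $x_A$, so that the leftover $\|x\|_q^q - \|x_A\|_q^q$ collapses to $\|x_{\tail(k)}\|_q^q$ — gives
$$\|x - z\|_q^q \le 2\|x_B - w_B\|_q^q + \|w_A - x_A\|_q^q + \|x_{\tail(k)}\|_q^q.$$
The vectors $x_B - w_B$ and $w_A - x_A$ are restrictions of $x - w$ to the $k$-sets $B$ and $A$, so they are $k$-sparse and each has $\ell_p$-norm at most $\|x - w\|_p$. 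Applying Fact~\ref{fact:lp_ab} to convert each $\ell_q^q$ into $k^{1 - q/p}$ times an $\ell_p^q$, I obtain
$$\|x - z\|_q^q \le 3\, k^{1 - \frac{q}{p}} \|x - w\|_p^q + \|x_{\tail(k)}\|_q^q.$$

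The key point, and the reason no shelling is needed, is that both surviving error terms are already measured against the best $k$-term approximation $\|x_{\tail(k)}\|_q$. Substituting the hypothesis $\|x - w\|_p \le C'_{p,q}\, k^{1/p - 1/q}\|x_{\tail(k)}\|_q$ raises the factor $k^{q/p - 1}$, which cancels $k^{1 - q/p}$ exactly — with no change of tail index from $2k$ to $k$ — leaving
$$\|x - z\|_q^q \le \left(1 + 3 (C'_{p,q})^q\right)\|x_{\tail(k)}\|_q^q.$$
Taking $q$-th roots yields the corollary with $C_{p,q} = \left(1 + 3 (C'_{p,q})^q\right)^{1/q}$.

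I do not expect a genuine obstacle: every inequality is a reuse of a step already justified in \Theorem{main-thm}, and the only real content is the bookkeeping that, when $r = s = q$, the power of $k$ produced by Fact~\ref{fact:lp_ab} cancels the one produced by the hypothesis without any index change. The single point needing care is that the entire chain relies on $\|\cdot\|_q^q$ being a metric, which holds precisely because $q \le 1$; for $q > 1$ this route breaks and one would be forced back to the shelling-based $r > 1$ argument, reintroducing the lost factor of two.
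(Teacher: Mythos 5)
Your proof is correct and is precisely the argument the paper intends: rerunning the $r\le 1$ quasinorm chain of Theorem~\ref{thm:main-thm} with $A$ and $B$ taken as $k$-sets (projecting to the largest $k$ coordinates), so that the $k^{1-q/p}$ from Fact~\ref{fact:lp_ab} exactly cancels the $k^{q/p-1}$ from the hypothesis and Lemma~\ref{lem:shelling} is never invoked. The resulting constant $C_{p,q} = \bigl(1+3(C'_{p,q})^q\bigr)^{1/q}$ matches what the paper's chain yields when specialized to $r=s=q$.
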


Note that in this case our recovery algorithm $\mathcal{R}$ projects $\mathcal{R}'(\Phi x)$ to its largest $k$ coordinates, rather than the largest $2k$.

\bibliographystyle{alpha}
\bibliography{biblio}

\end{document}